
\documentclass{article}

\usepackage{microtype}
\usepackage{graphicx}
\usepackage{subfigure}
\usepackage{subcaption}
\usepackage{booktabs} 

\usepackage{hyperref}



\usepackage[accepted]{icml2025}

\usepackage{amsmath}
\usepackage{amssymb}
\usepackage{mathtools}
\usepackage{amsthm}

\usepackage[capitalize,noabbrev]{cleveref}

\usepackage{caption}
\usepackage{pgfplots}
\usepackage{bbm}

\newtheorem{thm}{Theorem}
\newtheorem{cor}[thm]{Corollary}

\newtheorem{prop}[thm]{Proposition}

\theoremstyle{definition}
\newtheorem{df}{Definition}
\newtheorem{eg}{Example}

\theoremstyle{remark}
\newtheorem{rmk}{Remark}
\allowdisplaybreaks

\newcommand{\NEG}{\mathrm{pNE}(G)}
\newcommand{\R}{\mathbb{R}}
\renewcommand{\t}{\mathbf{t}}
\newcommand{\x}{\mathbf{x}}
\newcommand{\X}{\mathcal{X}}
\newcommand{\y}{\mathbf{y}}
\newcommand{\z}{\mathbf{z}}
\newcommand{\argmin}{\mathop{\mathrm{argmin}}}
\newcommand{\argmax}{\mathop{\mathrm{argmax}}}

\allowdisplaybreaks


\theoremstyle{plain}

\theoremstyle{definition}

\theoremstyle{remark}

\usepackage[textsize=tiny]{todonotes}

\icmltitlerunning{The Battling Influencers Game}

\begin{document}

\title{The Battling Influencers Game: Nash Equilibria Structure of a Potential Game and Implications to Value Alignment}
\author{Young Wu, Yancheng Zhu, Jin-Yi Cai, Xiaojin Zhu}
\date{Department of Computer Sciences, University of Wisconsin--Madison}
\maketitle

\begin{abstract}
When multiple influencers attempt to compete for a receiver's attention, their influencing strategies must account for the presence of one another.  We introduce the Battling Influencers Game (BIG), a multi-player simultaneous-move general-sum game, to provide a game-theoretic characterization of this social phenomenon.  We prove that BIG is a potential game, that it has either one or an infinite number of pure Nash equilibria (NEs), and these pure NEs can be found by convex optimization. Interestingly, we also prove that at any pure NE, all (except at most one) influencers must exaggerate their actions to the maximum extent.  In other words, it is rational for the influencers to be non-truthful and extreme because they anticipate other influencers to cancel out part of their influence. We discuss the implications of BIG to value alignment.
\end{abstract}

\section{Introduction} 

Life is full of agents who want to influence others:
Food truck vendors entice us with BBQ samples;
Social media influencers review selective pickleball brands to persuade us;
Editors publish op-eds to sway public opinions.
When multiple influencers with conflicting interests battle for our attention, intuitively they would be strategic and adjust their actions to account for the presence of one another in order to be effective. 

This paper presents a game theoretic definition of the \emph{Battling Influencers Game (BIG)}.  
We model the influencers as players in a multi-player simultaneous-move general-sum game.
Our main technical result is that BIG is a potential game with special pure Nash Equilibria structures.
Consequently, we can predict how rational influencers would adjust their strategies in the battle: \textbf{exaggeration is inevitable}. 
This prediction may shed new computational light on the genesis of misinformation.

As a use case, BIG can be applied to the AI value alignment problem.
The receivers of the influence were traditionally people, but can extend to AI value alignment algorithms.
However, unlike in standard machine learning, our focus is not on the value alignment algorithm itself.
Instead, BIG predicts how battling alignment-data providers could be motivated to intentionally produce training data that do not truthfully reflect their values.
While out of scope for the current paper, our insight can help design future value alignment algorithms to remove such incentives.

\section{Related Work} 

Our work provides a game theoretic model of the numerical example and informal theorem in section $5$ of ~\cite{park2024rlhf}, in particular, we also assume strategic data providers (which we call influencers) to large language models (which we call receivers), and our model leads to results consistent with their example where the data providers untruthfully report their opinions. In addition, we prove the existence of pure strategy Nash equilibria of this class of games, and we show the property that almost all influencers maximally exaggerate their preferences in every equilibrium. Our work is also closely related to ~\cite{hao2024online}, which models the interaction between multiple influencers by a dynamic Bayesian game. They described the phenomenon of strategic extreme exaggeration, which is also discussed in ~\cite{sun2024mechanism}, ~\cite{soumalias2024truthful}, ~\cite{conitzer2024social}, and ~\cite{roughgarden2017online} for various applications, but they do not explicitly compute the equilibria of the original game or quantify the amount of exaggeration. In comparison, we use a static game with known influencer types and we are able to provide better characterizations of the set of equilibria of the game.

Value alignment aims to make language models produce outputs that are more aligned with human values. Existing training frameworks tailored for this purpose, such as ~\cite{ouyang2022training} and ~\cite{rafailov2024direct}, collect preference data from humans and train a large language model to follow users' intent. However, research in this direction mostly focuses on the algorithmic aspects of value alignment and does not emphasize the heterogeneity of human values. There has been work that studies how to make LLMs align with diverse preferences of different demographic groups ~\cite{bakker2022fine, chen2024pal}, but they do not have a rigorous game theoretic foundation that characterizes strategic behaviors of preference data providers. There is also recent work by ~\cite{munos2023nash}, ~\cite{swamy2024minimaximalist}, and ~\cite{rosset2024direct} on learning a pairwise or general preference model, where they used the term Nash learning or Direct Nash Optimization. The players in their zero-sum games are not strategic data providers, and they use the Nash equilibrium solution concept mainly as an optimization technique to solve their minimax problem.



\section{Problem Definition} 

The Battling Influencer Game (BIG) is an $n$-player simultaneous-move game.
The players are the $n$ influencers.
The players have a common continuous action space $\X \subset \R^d$ which is compact and convex.
Let $\x_i \in \X$ be the action of the $i$th player for $i \in [n] \coloneqq \{1, 2, ..., n\}$.
For example, $\x_i$ could be the embedding vector of the text corpus that influencer $i$ produces.
A joint action, or pure strategy profile, $\x=(\x_1, \ldots, \x_n)$ denotes the simultaneous action choice of all influencers.
As in standard literature, we also write $\x=(\x_i, \x_{-i})$ when we want to emphasize player $i$.

For the narrative, we posit a receiver whom the influencers want to influence.  The receiver is not a strategic agent and not a player of the game.
Like an impressionable person, the receiver aggregates to various degrees the inputs it receives from all influencers. 
In this paper we consider affine receivers of the form
\begin{equation}
\label{eq:receiver}
\hat\x := w_0 \x_0 + \sum_{i=1}^n w_i \x_i,
\end{equation}
where $w_i, i \in [n]$ is a real-valued (not necessarily normalized) weight that signifies how much influence influencer $i$ has on the receiver.
For example, a company which can afford to buy more ads has a larger $w_i$ compared to a company with a smaller budget.
$\x_0 \in \X$ is a bias term that, together with $w_0\in \R$, denotes a fixed, constant ``background'' influence that is beyond the control of the $n$ influencers.
The receiver~\eqref{eq:receiver} is common knowledge to all players.

The $n$ influencers each has a target $\t_i \in \R^d$ (i.e. the target is not restricted to $\X$).
For example, $\t_i$ could be the embedding vector of the published party manifesto of political party $i$.
The goal of influencer $i$ is to drive the receiver's $\hat\x$ close to $\t_i$.
This goal is reflected in the loss (negative utility) function of influencer $i$. 
For concreteness, here we consider squared 2-norm as the loss:
\begin{equation}
\label{eq:ell}
\ell_i(\x) := \|\hat\x - \t_i\|_2^2 = \|w_0 \x_0 + \sum_{i=1}^n w_i \x_i - \t_i\|_2^2.
\end{equation}
(See Section~\ref{sec:cosine} for an alternative using cosine similarity.) 
As rational agents, the influencers want to selfishly minimize their own $\ell_i(\x)$.
The fact that the receiver's $\hat\x$ is defined by the joint action $\x$ couples the influencers together in a general-sum game.
It is due to the possible differences in $\t_1, \ldots, \t_n$ that the influencers battle one another.

The above narrative can be abstracted into the following formal definition of BIG, where the receiver becomes implicit:
\begin{df} [Battling Influencer Game (BIG)] \label{df:game}
The Battling Influencer Game is an $n$-player general-sum game $G = \left(n, \X, \{\ell_i\}_{i=1}^{n}\right)$. 
where $n$ is the number of influencers, $\X\subset \R^d$ is a compact and convex action space, and $\ell_i: \X^n \mapsto \R$ taking the form of equation~\eqref{eq:ell} is player $i$'s loss function.
The parameters $\x_0, \{w_i\}_{i=0}^{n}, \{\t_i\}_{i=1}^{n}$ of the loss functions are common knowledge to all players.
\end{df}

In the rest of the paper, we are interested in finding the pure strategy Nash equilibria (NEs) of the game $G$.
We then characterize properties of these NEs, interpreting them in the context of influencers.

\section{Pure NEs of BIG and Their Properties}

\begin{df}  \label{df:ne} 
A pure strategy Nash equilibrium of the game $G = \left(n, \X, \left\{\ell_i\right\}_{i=1}^{n}\right)$ is a strategy profile $\x\in \X^n$ satisfying,
\begin{align}
\ell_{i}\left(\x_{i}, \x_{-i}\right) &\le \ell_{i}\left(\y, \x_{-i}\right), \forall\; \y \in \X, i \in [n] .
\end{align}\end{df}
A mixed strategy Nash equilibrium is a pure strategy Nash equilibrium of the mixed extension $G' = \left(n, \Delta \X, \left\{\ell_i\right\}_{i=1}^{n}\right)$ where the set of actions for each player is a distribution (called a mixed strategy) over the original action set $\X$, that is, a mixed strategy profile $s_{1:n}$ with $s_{i} \in \Delta \X$ satisfying,
\begin{align}
\mathbb{E}\left[\ell_{i}\left(s_{i}, s_{-i}\right)\right] &\leq \mathbb{E}\left[\ell_{i}\left(s', s_{-i}\right)\right], \forall\; s' \in \Delta \X, i \in [n] .
\end{align}
In general, for finite games, for example, when $\X$ is finite, there exists at least one mixed strategy Nash equilibrium, but computing the Nash equilibrium is PPAD-complete (Polynomial Parity Arguments on Directed graphs). When $\X$ is not finite, which is usually the case for our BIG problem, a mixed strategy Nash equilibrium is not guaranteed to exist.

Potential games are games with a special structure and allow strong results on pure Nash equilibria.
We will show BIG is a potential game.
The difficulty is in finding the potential function.
The following theorem provides a constructive proof.

\begin{thm} [Potential Game] \label{thm:ctp} 
The Battling Influencers Game $G$ is a potential game with the potential function
\begin{equation}
\phi(\x)=\phi(\x_1,\ldots,\x_n) := \left\| \sum_{i=0}^n w_i \x_i \right\|_2^2 - 2 \sum_{i=1}^n w_i \t_i^\top \x_i.
\end{equation}
\end{thm}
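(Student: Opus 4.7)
The plan is to verify directly that $\phi$ is an exact potential function for $G$. Recall this requires showing, for every player $i\in[n]$, every $\x\in\X^n$, and every unilateral deviation $\y\in\X$,
\begin{equation*}
\ell_i(\x_i,\x_{-i}) - \ell_i(\y,\x_{-i}) = \phi(\x_i,\x_{-i}) - \phi(\y,\x_{-i}).
\end{equation*}
An equivalent and more convenient formulation is to show that the function $\x \mapsto \ell_i(\x) - \phi(\x)$ depends only on $\x_{-i}$; both sides of the identity then collapse to the same expression, and the equality follows automatically.

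To carry this out I would introduce the shorthand $S := \sum_{j=0}^{n} w_j \x_j$, so that $\ell_i(\x) = \|S\|_2^2 - 2\t_i^\top S + \|\t_i\|_2^2$. The term $\|S\|_2^2$ is precisely the first summand of $\phi$, and $\|\t_i\|_2^2$ is a constant. Expanding the middle term $-2\t_i^\top S = -2 w_0 \t_i^\top \x_0 - 2\sum_{j=1}^{n} w_j \t_i^\top \x_j$, the only contribution that actually depends on $\x_i$ is $-2 w_i \t_i^\top \x_i$. Turning to $\phi$, the $\x_i$-dependent piece of $-2\sum_{j=1}^{n} w_j \t_j^\top \x_j$ is also $-2 w_i \t_i^\top \x_i$, since only the $j=i$ summand involves $\x_i$. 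Hence the $\x_i$-dependent parts of $\ell_i$ and $\phi$ coincide, and the remaining terms are functions of $\x_{-i}$ and constants, yielding the potential identity.

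Equivalently, one can bypass the expansion by comparing gradients: $\nabla_{\x_i}\ell_i(\x) = 2 w_i (S - \t_i)$ and $\nabla_{\x_i}\phi(\x) = 2 w_i S - 2 w_i \t_i$ clearly agree on all of $\R^d$, so $\ell_i - \phi$ has zero gradient in $\x_i$ and is thus independent of $\x_i$. I expect no serious obstacle in this argument; as the paragraph preceding the theorem notes, the substantive conceptual step is \emph{guessing} the form of $\phi$, in particular recognizing that one must add the quadratic $\|\sum_{j=0}^n w_j \x_j\|_2^2$ (which couples the players symmetrically) to the linear corrections $-2 w_i \t_i^\top \x_i$ (which restore each player's individual target). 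Once $\phi$ is written down, the verification is the short algebraic/gradient check above, the only bookkeeping point being that the bias $w_0\x_0$ must appear inside $S$ but not in the second sum of $\phi$, since $\x_0$ is not controlled by any player.
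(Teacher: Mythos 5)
Your proof is correct and takes essentially the same route as the paper's: a direct verification that a unilateral change in $\x_i$ alters $\ell_i$ and $\phi$ by the same amount, which the paper carries out by expanding both around the auxiliary quantity $\z = w_0\x_0 + \sum_{j\neq i} w_j \x_j$ and you carry out by isolating the $\x_i$-dependent terms (equivalently, by matching $\nabla_{\x_i}\ell_i = \nabla_{\x_i}\phi = 2w_i(S-\t_i)$). The bookkeeping point you flag about $w_0\x_0$ appearing in $S$ but not in the linear correction sum is exactly right and is the only place one could slip.
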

\begin{proof}  \label{proof:cptpf} 
We need to show if any player $i$ deviates from action $\x_i$ to any action $\y\in\X$,
we have
$\ell_i(\x_i,\x_{-i}) - \ell_i(\y,\x_{-i})=
\phi(\x_i,\x_{-i}) - \phi(\y,\x_{-i}).$
To this end, define an auxiliary variable $\z$ that does not depend on $\x_i$ or $\y$:
$$\z := w_0 \x_0 + \sum_{j\neq i} w_j \x_j.$$
Then 
$\ell_i(\x_i,\x_{-i})=\|w_i\x_i + (\z-\t_i)\|^2 = \|w_i\x_i\|^2 + 2w_i\x_i^\top (\z-\t_i) + \|\z-\t_i\|^2$,
and
\begin{align*}
&\ell_i(\x_i,\x_{-i})-\ell_i(\y,\x_{-i})\\
&=\|w_i\x_i\|^2 + 2w_i\x_i^\top(\z-\t_i) - \|w_i\y\|^2 - 2w_i\y^\top(\z-\t_i).
\end{align*}
On the other hand, 
\begin{multline*}
\phi(\x_i,\x_{-i})=\|w_i\x_i + \z\|^2 - 2 w_i \t_i^\top \x_i - 2 \sum_{j\neq i} w_j \t_j^\top \x_j \\
=\|w_i\x_i\|^2 + 2 w_i (\z-\t_i)^\top \x_i + \|\z\|^2 - 2 \sum_{j\neq i} w_j \t_j^\top \x_j. 
\end{multline*}
The last two terms do not depend on $\x_i$.  Hence
\begin{align*}
&\phi(\x_i,\x_{-i})-\phi(\y,\x_{-i}) \\
&= \|w_i\x_i\|^2 + 2 w_i (\z-\t_i)^\top \x_i - \|w_i\y\|^2 - 2 w_i (\z-\t_i)^\top \y \\
&= \ell_i(\x_i,\x_{-i})-\ell_i(\y,\x_{-i}). 
\end{align*}
\end{proof}

We next show that $\x\in\X^n$ is a pure NE of $G$ if and only if $\x$ is a minimum of $\phi$ restricted to the domain $\X^n$. 

\begin{prop} [Pure NEs $\iff$ minima] \label{prop:cvx} 
The set of pure Nash equilibria in $G$ is
\begin{align}
\NEG &= \mathop{\mathrm{argmin}}_{\x\in\X^n} \phi(\x).
\end{align}\end{prop}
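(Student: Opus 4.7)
The plan is to invoke the potential-function identity from Theorem~\ref{thm:ctp} to recast the Nash condition in Definition~\ref{df:ne} purely in terms of $\phi$. Because $\ell_i(\x_i,\x_{-i})-\ell_i(\y,\x_{-i}) = \phi(\x_i,\x_{-i})-\phi(\y,\x_{-i})$ for every $i$ and every $\y\in\X$, the inequality defining a NE for player $i$ is logically equivalent to the same inequality with $\phi$ in place of $\ell_i$. Hence $\x$ is a pure NE if and only if, for each $i$, $\x_i\in\argmin_{\y\in\X}\phi(\y,\x_{-i})$; that is, pure NEs coincide with the \emph{block-coordinate minima} of $\phi$ on $\X^n$, and the task reduces to showing that these coincide with the global minima of $\phi$ on $\X^n$.

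The inclusion $\argmin_{\x\in\X^n}\phi(\x)\subseteq \NEG$ is immediate: any joint minimizer is trivially a block-coordinate minimizer, hence a pure NE via the reformulation above.

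For the reverse inclusion, I would first record two structural facts about $\phi$. First, $\phi$ is convex in $\x=(\x_1,\ldots,\x_n)$, because $\bigl\|\sum_{i=0}^n w_i\x_i\bigr\|_2^2$ is the composition of the convex map $\|\cdot\|_2^2$ with an affine function of $\x$, while $-2\sum_{i=1}^n w_i\t_i^\top \x_i$ is linear in $\x$. Second, $\X^n$ is convex as a product of convex sets, and $\phi$ is smooth. With these in hand, the standard first-order characterization of constrained convex minimization takes over: block-coordinate optimality at $\x$ gives, for each $i$, the variational inequality $\nabla_{\x_i}\phi(\x)^\top(\y-\x_i)\ge 0$ for all $\y\in\X$. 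Summing these $n$ inequalities coordinate-wise yields $\nabla\phi(\x)^\top(\x'-\x)\ge 0$ for every $\x'=(\x'_1,\ldots,\x'_n)\in\X^n$, which is exactly the first-order sufficient condition for $\x$ to be a global minimizer of the convex function $\phi$ on the convex set $\X^n$.

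The main obstacle is precisely this reverse inclusion: without convexity, block-coordinate optimality is strictly weaker than joint optimality (saddle points of a non-convex potential could satisfy the former without the latter). The argument therefore leans on the specific quadratic-plus-linear structure of $\phi$ supplied by Theorem~\ref{thm:ctp} to secure joint convexity, after which the result is a one-line consequence of summing the per-player first-order conditions.
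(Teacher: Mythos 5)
Your proposal is correct and follows essentially the same route as the paper: both arguments rest on the convexity of $\phi$ (quadratic-plus-linear structure) together with the equivalence, for a smooth convex function on a product of convex sets, between per-player (block-coordinate) optimality and global optimality. The only difference is that the paper delegates this last equivalence to Theorem 1 of Neyman (1997) and its corollary, whereas you prove it from scratch by summing the per-player variational inequalities $\nabla_{\x_i}\phi(\x)^\top(\y-\x_i)\ge 0$ to obtain the joint first-order condition --- a valid and self-contained substitute for the citation.
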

\begin{proof} 
Our potential function $\phi$ is the sum of two convex functions in $\x$ and hence convex.
Since the domain $\X^n$ is convex and $\phi$ is smooth and convex, by Theorem 1 of~\cite{neyman1997correlated} and its corollary, the set of pure Nash equilibria coincides with the minima of the potential function on $\X^n$.
\end{proof}

We remark that by definition $\X$ is compact and convex, thus $\X^n$ is bounded and closed.  
The potential function $\phi(\x)$ may not have a global minimum on the extended domain $\R^{nd}$ (it could diverge to $-\infty$ there), but on $\X^n$ it will have at least one minimum (perhaps on the boundary).
In fact, we have the following guarantee.

\begin{cor} [Cardinality of $\NEG$] \label{cor:cvx} 
$G$ has either one pure NE or infinite pure NEs.
\end{cor}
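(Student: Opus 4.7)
The plan is to chain Proposition \ref{prop:cvx} with basic convex analysis and conclude by a dichotomy on convex sets. By Proposition \ref{prop:cvx}, $\NEG = \argmin_{\x \in \X^n} \phi(\x)$, so it suffices to show that the minimum set of $\phi$ on $\X^n$ is either a singleton or an infinite set.

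First I would establish existence: $\phi$ is a continuous function on $\X^n$, and since $\X$ is compact and convex, $\X^n$ is also compact, so $\phi$ attains its minimum. Hence $\NEG$ is nonempty. Next I would argue that $\NEG$ is itself a convex subset of $\X^n$. Recall from the proof of Proposition \ref{prop:cvx} that $\phi$ is convex on $\X^n$. If $\x, \y \in \NEG$ with common minimum value $\phi^\star := \phi(\x) = \phi(\y)$, then for any $\lambda \in [0,1]$ the convex combination $\lambda \x + (1-\lambda) \y$ lies in the convex domain $\X^n$, and convexity of $\phi$ gives
\begin{equation*}
\phi(\lambda \x + (1-\lambda)\y) \le \lambda \phi(\x) + (1-\lambda)\phi(\y) = \phi^\star,
\end{equation*}
so $\lambda \x + (1-\lambda)\y \in \NEG$ as well.

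Finally I would invoke the elementary fact that a nonempty convex subset of a Euclidean space either consists of a single point or contains the line segment between two distinct points, which is already an uncountable set. Applied to $\NEG$, this yields the desired dichotomy: either $|\NEG|=1$ or $|\NEG|=\infty$. No step presents a real obstacle; the only thing to be mindful of is that convexity of $\phi$ was only remarked inside the proof of Proposition \ref{prop:cvx}, so I would briefly re-state it (as the sum of a squared norm of an affine function and a linear function in $\x$) when invoking it here.
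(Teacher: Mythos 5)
Your proposal is correct and follows essentially the same route as the paper: identify $\NEG$ with the minimizer set of the convex potential via Proposition~\ref{prop:cvx}, get nonemptiness from compactness, get convexity of the minimizer set, and conclude via the point-or-segment dichotomy. The only cosmetic difference is that you prove convexity of the $\argmin$ set directly with the one-line convexity inequality, whereas the paper cites a corollary of \cite{neyman1997correlated} for the same fact.
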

\begin{proof}  
Since the set of minima of the convex potential function on a compact domain $\X^{n}$ is non-empty, there is at least one pure Nash equilibrium.
Since the set of minima of the convex potential function on a convex domain $\X^{n}$ is convex (corollary in~\cite{neyman1997correlated}), any linear combination of two distinct pure Nash equilibria is another pure Nash equilibrium.
\end{proof}


We now provide a few illustrative examples of BIG.

\begin{eg}[Two influencers with 1D actions]
\label{eg:1D}
There are $n=2$ influencers whose individual action space is $\X = \left[a, b\right]$ (we use $a = 0, b = 6$ in Figures~\ref{fig:bdc} and~\ref{fig:itcl}). 
The receiver takes the average: $\hat x = {x_1+x_2\over 2}$.  
If the targets satisfy $t_1<{a+b\over2}$ and $t_2>{a+b\over 2}$ as in Figure~\ref{fig:bdc}, then there is a unique pure NE: $\NEG=\{(x_1=a, x_2=b)\}$.
Note $(x_1=t_1, x_2=t_2)$ is in general not a NE: if $x_2=t_2$, the first influencer will want to take a more extreme action $x_1<t_1$ to drive the receiver $\hat x = {x_1+x_2\over 2}$ closer to its target $t_1$, and vice versa for the second influencer, \textit{ad infinitum} until they hit the boundary.
In other words, \textbf{both influencers must exaggerate their actions to the maximum extent possible.}
In fact, this is the well-known best response dynamics in potential games which we discuss later. 
As a result, both influencers will end up playing at the opposite boundary of $\X$.
No one is entirely happy: the receiver ends up in the middle $\hat x={a+b\over2}$ so no influencer achieves their target.
Still, this is the best each influencer can do under the presence of other influencers.


\begin{figure} \centering \begin{tikzpicture} [scale = 0.5] 
\draw[thick] (0.0, 0.0) -- (6.0, 0.0);
\draw[cyan, fill=cyan, thick] (0.9, -0.1) rectangle (1.1, 0.1);
\node[below] at (1.0, -0.1){$t_{1}$};
\draw[black!50!green, fill=black!50!green, thick] (3.9, -0.1) rectangle (4.1, 0.1);
\node[below] at (4.0, -0.1){$t_{2}$};
\draw[cyan, fill=cyan, thick] (0.0, 0.0) circle [radius = 0.1];
\node[below] at (0.0, -0.1){$x_{1}$};
\draw[black!50!green, fill=black!50!green, thick] (6.0, 0.0) circle [radius = 0.1];
\node[below] at (6.0, -0.1){$x_{2}$};
\draw[orange, fill=orange, thick] (3.0, 0.0) circle [radius = 0.1];
\node[below] at (3.0, -0.1){$\hat{x}$};
\node[above] at (0.0, 0.1){$0$};
\node[above] at (6.0, 0.1){$6$};
\node[above] at (3.0, 0.1){$3$};
\node[above] at (1.0, 0.1){$1$};
\node[above] at (4.0, 0.1){$4$};
\end{tikzpicture}
\includegraphics[scale=0.5]{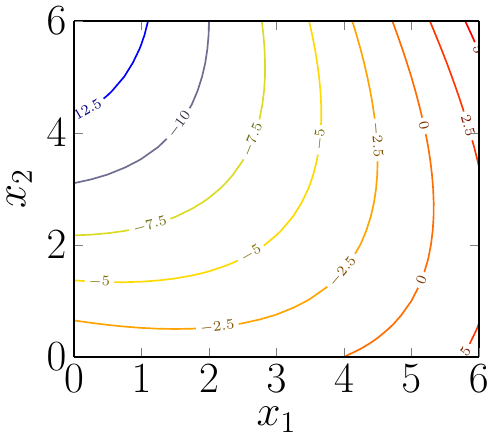}
\caption{(left) Both players maximally exaggerate their actions with pure NE $(x_1=0, x_2=6)$. (right) the potential function $\phi$}
    \label{fig:bdc} \end{figure}

A more nuanced situation happens if $t_1, t_2$ are on the same side of ${a+b\over 2}$, for example the left side in Figure~\ref{fig:itcl}.
There is still a unique pure NE; at the NE both influencers still need to misrepresent their target.
However, the influencer whose target is closer to the center point (in this example, $t_2$) can claim victory: 
The other influencer simply runs out of more left-leaning actions and has to stop at the left boundary $x_1=a$.
The winning influencer best-responds with $x_2=2 t_2 - a$, so that the receiver will end up at its target $\hat x = t_2$.
Thus $\NEG=\{(x_1 = a, x_2 = 2 t_2 - a)\}$ in this example.
Interestingly, if in addition $t_2 > \left(a + {a+b\over 2}\right)/2$ then influencer 2 indeed has a left-leaning target but has to misrepresent itself as right-leaning ($x_2>{a+b\over 2}$) to the receiver.
We will generalize this example in Theorem~\ref{thm:bdy}, where we show at most one influencer can be interior.


\begin{figure} \centering \begin{tikzpicture} [scale = 0.5] 
\draw[thick] (0.0, 0.0) -- (6.0, 0.0);
\draw[cyan, fill=cyan, thick] (0.9, -0.1) rectangle (1.1, 0.1);
\node[below] at (1.0, -0.1){$t_{1}$};
\draw[black!50!green, fill=black!50!green, thick] (1.9, -0.1) rectangle (2.1, 0.1);
\node[below] at (2.0, -0.1){$t_{2}$};
\draw[cyan, fill=cyan, thick] (0.0, 0.0) circle [radius = 0.1];
\node[below] at (0.0, -0.1){$x_{1}$};
\draw[black!50!green, fill=black!50!green, thick] (4.0, 0.0) circle [radius = 0.1];
\node[below] at (4.0, -0.1){$x_{2}$};
\draw[orange, fill=orange, thick] (2.0, 0.0) circle [radius = 0.075];
\node[above] at (2.0, 0.1){$2$};
\node[above] at (2.0, 1.1){$\hat{x}$};
\draw[black, fill=black, thick] (3.0, 0.0) circle [radius = 0.03];
\node[above] at (0.0, 0.1){$0$};
\node[above] at (6.0, 0.1){$6$};
\node[above] at (3.0, 0.1){$3$};
\node[above] at (1.0, 0.1){$1$};
\node[above] at (4.0, 0.1){$4$};
\end{tikzpicture}
\includegraphics[scale=0.5]{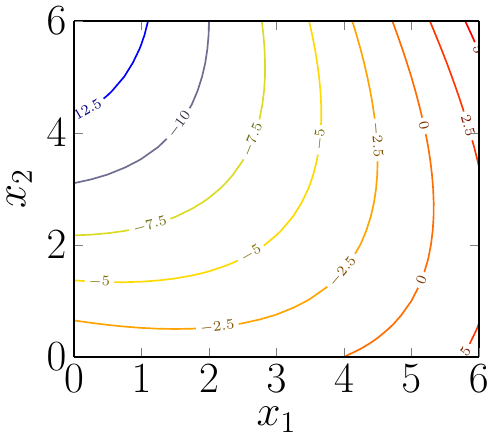}
\caption{Pure NE $(x_1=0, x_2=4)$. Influencer 2 not at boundary.} \label{fig:itcl} \end{figure}

\end{eg}

\begin{eg} [2D actions] \label{eg:tdpt} 
Same as Example~\ref{eg:1D} but let $d = 2$ and $\X = \left[-a, a\right] \times \left[-a, a\right]$.
The game may now have an infinite number of pure NEs.
For example, let the targets be
$\t_1 = \begin{bmatrix} 1 \\ 0 \end{bmatrix}$ and $\t_2=-\t_1$ as in Figure~\ref{fig:fc} (left).
Then $\NEG=\left\{\left(\begin{bmatrix} -a \\ -z \end{bmatrix} , \begin{bmatrix} a \\ z \end{bmatrix} \right): z \in \left[-a, a\right] \right\}$.
These infinite many pure NEs are indicated by the blue and green line segments, paired through the origin. 
All of them are exaggerations from both influencers in terms of the $x$-axis.
All of them result in the receiver arriving at the origin.

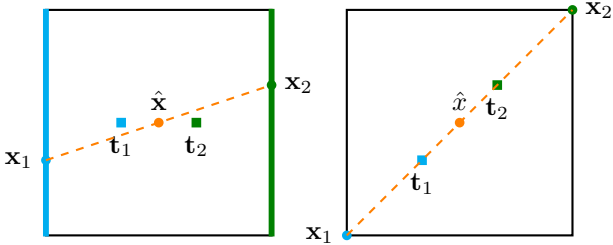
\begin{figure}[H] \centering \begin{tikzpicture} [scale = 0.5] 
\draw[thick] (0.0, 0.0) rectangle (6.0, 6.0);
\draw[cyan, fill=cyan, thick] (0.0, 2.0) circle [radius = 0.1];
\node[left] at (-0.1, 2.0){$\x_{1}$};
\draw[black!50!green, fill=black!50!green, thick] (6.0, 4.0) circle [radius = 0.1];
\node[right] at (6.1, 4.0){$\x_{2}$};
\draw[orange, fill=orange, thick] (3.0, 3.0) circle [radius = 0.1];
\node[above] at (3.0, 3.1){$\hat{\x}$};
\draw[cyan, fill=cyan, thick] (1.9, 2.9) rectangle (2.1, 3.1);
\node[below] at (2.0, 2.9){$\t_{1}$};
\draw[black!50!green, fill=black!50!green, thick] (3.9, 2.9) rectangle (4.1, 3.1);
\node[below] at (4.0, 2.9){$\t_{2}$};
\draw[cyan, fill=cyan, thick] (-0.05, 0.0) rectangle (0.05, 6.0);
\draw[black!50!green, fill=black!50!green, thick] (5.95, 0.0) rectangle (6.05, 6.0);
\draw[dashed, orange, thick] (0.0, 2.0) -- (6.0, 4.0);
\draw[thick] (8.0, 0.0) rectangle (14.0, 6.0);
\draw[cyan, fill=cyan, thick] (8.0, 0.0) circle [radius = 0.1];
\node[left] at (7.9, 0.0){$\x_{1}$};
\draw[black!50!green, fill=black!50!green, thick] (14.0, 6.0) circle [radius = 0.1];
\node[right] at (14.1, 6.0){$\x_{2}$};
\draw[orange, fill=orange, thick] (11.0, 3.0) circle [radius = 0.1];
\node[above] at (11.0, 3.1){$\hat{x}$};
\draw[cyan, fill=cyan, thick] (9.9, 1.9) rectangle (10.1, 2.1);
\node[below] at (10.0, 1.9){$\t_{1}$};
\draw[black!50!green, fill=black!50!green, thick] (11.9, 3.9) rectangle (12.1, 4.1);
\node[below] at (12.0, 3.9){$\t_{2}$};
\draw[dashed, orange, thick] (8.0, 0.0) -- (14.0, 6.0);
\end{tikzpicture} \captionof{figure}{Examples of infinite (left) and unique (right) pure Nash equilibria in $d=2$}\label{fig:fc} \end{figure}

In contrast, if the targets are arranged as in Figure~\ref{fig:fc} (right), there will only be a unique pure NE: $\NEG=\left\{
\left(\begin{bmatrix} -a \\ -a \end{bmatrix} , \begin{bmatrix} a \\ a \end{bmatrix} \right)
\right\}$.
\end{eg}

As seen from these examples, in BIG the pure NEs often involve all influencers misrepresenting their true target to the receiver (i.e. $\x_i \neq \t_i$).
Furthermore, such misrepresentation often takes the form of \emph{extreme exaggeration}, in the sense that an influencer's rational action $\x_i$ at any pure NE is often pushed to the boundary of the action space $\X$ so they cannot exaggerate the action further.
Our next theorem precisely quantifies this phenomenon. We prove that, provided that the influencers' targets $\t_1 \ldots \t_n$ are all distinct, \textbf{at any pure NE extreme exaggeration is necessary for all but at most one influencer}.
It is possible that all influencers must perform extreme exaggeration.
Furthermore, if there exists one influencer (say influencer $i^*$) who does not, then it is the winner in that the receiver will end up at its target $\t_{i^*}$; Still, this winning influencer in general also need to misrepresent its target $\x_{i^*} \neq \t_{i^*}$ to the receiver, it is just that $\x_{i^*}$ is in the interior of $\X$ and not extreme.


\begin{thm} [All-But-At-Most-One Extreme Exaggeration] \label{thm:bdy} 
If $\left\{\mathbf{t}_{i}\right\}_{i=1}^{n}$ are all distinct, then every pure NE $\left(\mathbf{x}_{1}, \mathbf{x}_{2}, ..., \mathbf{x}_{n}\right)$ satisfies the property that,
\begin{align}
\left| \left\{i\in[n] : \mathbf{x}_{i} \in \mathrm{int\;} \X\right\} \right| &\leq 1.
\end{align}
Furthermore, if $\x_{i^\star} \in \mathrm{int\;} \X$ for some $i^\star \in \left[n\right]$, then,
\begin{align}
\hat{\mathbf{x}} &= \mathbf{t}_{i^\star}.
\label{eq:ontarget}
\end{align}\end{thm}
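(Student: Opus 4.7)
The plan is to apply first-order optimality to any player whose action lands in the interior of $\X$, reading the on-target equation off the gradient. Fix any pure NE $\x$ and suppose $\x_{i^\star} \in \mathrm{int\;}\X$ for some index $i^\star$. By the definition of NE, $\x_{i^\star}$ is a best response, i.e.\ $\x_{i^\star} \in \argmin_{\y \in \X} \ell_{i^\star}(\y,\x_{-i^\star})$. Introducing the constant-in-$\y$ background $\z := w_0 \x_0 + \sum_{j \neq i^\star} w_j \x_j$, the restricted loss simplifies to $\ell_{i^\star}(\y,\x_{-i^\star}) = \|w_{i^\star}\y + \z - \t_{i^\star}\|_2^2$, which is a smooth convex quadratic in $\y$. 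Since its constrained minimizer $\x_{i^\star}$ lies in the open interior of $\X$, unconstrained first-order optimality applies: $\nabla_\y \ell_{i^\star}(\x_{i^\star},\x_{-i^\star}) = 0$.

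Computing the gradient gives $\nabla_\y \ell_{i^\star}(\y,\x_{-i^\star}) = 2w_{i^\star}(w_{i^\star}\y + \z - \t_{i^\star})$, and evaluating at $\y = \x_{i^\star}$ yields $2 w_{i^\star}(\hat{\x} - \t_{i^\star})$, since $w_{i^\star}\x_{i^\star} + \z = \hat{\x}$ by the definition of the receiver. Under the natural modeling assumption $w_{i^\star} \neq 0$, setting this to zero gives $\hat{\x} = \t_{i^\star}$, which is exactly equation~\eqref{eq:ontarget}. For the cardinality bound, suppose for contradiction that two distinct indices $i \neq j$ both satisfy $\x_i, \x_j \in \mathrm{int\;}\X$; applying the above conclusion to each gives $\hat{\x} = \t_i$ and $\hat{\x} = \t_j$, forcing $\t_i = \t_j$ and contradicting the distinctness hypothesis. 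Hence at most one index can be interior.

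The main obstacle I anticipate is a modeling caveat rather than a technical hurdle: if some $w_i = 0$, then influencer $i$'s loss is independent of $\x_i$, every point of $\X$ (including interior ones) is trivially a best response, and the gradient equation degenerates to $0 = 0$ without forcing $\hat{\x} = \t_i$. Two such zero-weight influencers could both sit in the interior without contradiction, so the theorem really requires the implicit assumption that every weight is nonzero. Beyond flagging this assumption, no additional machinery is needed --- the argument works directly from the NE definition and the first-order optimality of a convex quadratic, without even invoking the potential-function viewpoint of Theorem~\ref{thm:ctp} or Proposition~\ref{prop:cvx}.
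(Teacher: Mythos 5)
Your proof is correct and follows essentially the same route as the paper's: the paper applies first-order optimality to the potential $\phi$ at an interior coordinate, whose partial gradient $2w_i\bigl(\sum_{k=0}^n w_k\x_k - \t_i\bigr)$ is exactly the $\nabla_{\y}\ell_i$ you compute, so the two arguments reduce to the identical cancellation and the identical two-interior-points contradiction. Your caveat that $w_{i^\star}\neq 0$ is needed to divide out the factor $2w_{i^\star}$ is a valid observation; the paper's proof performs that cancellation silently, and the theorem as stated would indeed fail if two zero-weight influencers with distinct targets both sat in the interior.
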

\begin{proof}  \label{proof:bdypf} 
For any $i \in [n]$, 
\begin{align}
\nabla_{\x_i} \phi = 2 w_i \sum_{k=0}^n w_k \x_k - 2 w_i \t_i.
\end{align}
Suppose $\exists i, j\in [n], i\neq j: \x_i, \x_j \in \mathrm{int\;} \X$. Then 
\begin{align}
& \nabla_{\x_i} \phi = \boldsymbol{0} = \nabla_{\x_j} \phi \\
\Rightarrow & \t_i =  \sum_{k=0}^n w_k \x_k = \t_j,
\end{align}
a contradiction.
Equation~\eqref{eq:ontarget} follows from $\nabla_{\x_{i^*}} \phi = \boldsymbol{0}$ and the definition of the receiver~\eqref{eq:receiver}.
\end{proof}

Finally, we remark on algorithms to find pure NEs for BIG.
Due to Proposition~\ref{prop:cvx}, any convex optimization algorithm that minimizes the convex function $\phi$ over the convex set $\X^n$ with strong guarantees can be utilized to find a pure NE~\cite{boyd2004convex}.
Meanwhile, in the game theory community the best-response dynamics is a traditional algorithm for finding a pure NE in potential games: ~\cite{roughgarden2010algorithmic} 
\begin{df} [Best Response Dynamics] \label{prop:br} 
Starting from an arbitrary $\x^{\left(0\right)} \in \X^n$, the best response sequence $\left\{\x^{\left(t\right)}\right\}_{t=1}^{\infty}$ converges to a pure Nash equilibrium of $G$, where,
\begin{align}
\mathbf{x}^{\left(t\right)}_{i} &= \begin{cases} \mathop{\mathrm{argmin}}_{\mathbf{x}_{i}} \phi\left(\mathbf{x}_{i}, \mathbf{x}^{\left(t - 1\right)}_{-i}\right) & \text{\;if\;} i = t \mod n \\ \mathbf{x}^{\left(t-1\right)}_{i} & \text{\;otherwise.\;} \\ \end{cases} 
\end{align}
\end{df}

Of course, best response dynamics correspond to coordinate descent on $\phi$.
One interesting observation that is relevant for BIG is that, under best response dynamics, no influencer needs to know other influencers' targets.
This removes the requirement that $\t_1, \ldots, \t_n$ must be common knowledge.
Concretely, the influencers may carry out the best response dynamics as a learning dynamics in a distributed fashion over time, with no two influencers simultaneously updating their actions.
When influencer $i$ updates its own action $\x_i$, it observes other player's most recent actions $\x_{-i}$ but does not need to know their targets $\t_{-i}$.
This is because minimizing the potential function $\phi$ along the $\x_i$ direction is equivalent to minimizing its own loss function $\ell_i$:
\begin{align}
\mathop{\mathrm{argmin}}_{\x_i} \phi(\x_i, \x_{-i})=
\mathop{\mathrm{argmin}}_{\x_i} \ell_{i}(\x_i, \x_{-i}).
\end{align}
Therefore, the best response dynamics may offer a computational account on how influencers in the real world iteratively adjust their actions based on actions of other influencers without knowing the other influencers' true intentions, and still reaching an equilibrium.

\section{Extensions of BIG}

\subsection{An Alternative Player Loss Function}
\label{sec:cosine}
Up to now influencer $i$'s loss function~\eqref{eq:ell} is based on the Euclidean distance between its target point $\t_i \in \X$ and the receiver $\hat\x$.
In some applications, the following \emph{negative inner product loss} can be more appropriate:
\begin{equation}
\label{eq:ell2}
\ell_i(\x) := - \t_i^\top \hat\x.
\end{equation}
That is, influencer $i$ has a small loss if the receiver $\hat\x$ has a large projection onto the direction of target direction $\t_i$.

BIG with this negative inner product loss~\eqref{eq:ell2} has an even stronger guarantee: the game has a \emph{Weakly Dominant Strategy Equilibrium}. 

\begin{df} [Weakly Dominant Strategy Equilibrium (wDSE)] \label{df:sdse} 
An action profile $\x=(\x_1 \ldots \x_n) \in \X^{n}$ is a wDSE if for every player $i$,
\begin{align}
\label{eq:wDSE}
\ell_{i}\left(\x_{i}, \x_{-i}\right) &\leq l_{i}\left(\y, \x_{-i}\right), \; \forall \y \in \X, \x_{-i} \in \X^{n-1}.
\end{align}

\end{df}

\begin{rmk}  \label{rmk:ea} 
The term weakly dominant strategy equilibrium is used in Chapter 4.5 of~\cite{tadelis2013game}, and it is also called dominant strategy equilibrium in Chapter 10.3 of~\cite{osborne1994course}, and dominant strategy solution in Chapter 1.3.1 of~\cite{roughgarden2010algorithmic}. As noted in~\cite{osborne1994course}, an action in a wDSE is not required to weakly dominate all other actions for a player since the player could have multiple actions that are equivalent, all of which dominate the remaining actions. wDSE is also a weaker solution concept compared to (strictly) dominant strategy equilibrium, which requires strict inequality everywhere,
\begin{align}
\ell_{i}\left(\x_{i}, \x_{-i}\right) < l_{i}\left(\y, \x_{-i}\right), \; \forall \y \in \X, \x_{-i} \in \X^{n-1}.
\end{align}
\end{rmk}

\begin{thm}[Existence of wDSE] 
Under loss~\eqref{eq:ell2}, game $G$ has a wDSE $\x_i^*$ satisfying,
\begin{equation}
\label{eq:DSE}
\x_i^* \in \argmax_{\x_i \in \X} w_i \t_i^\top \x_i
\;,\forall i\in[n].
\end{equation}
\end{thm}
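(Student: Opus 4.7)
The plan is to exploit the fact that, unlike the squared-norm loss, the negative inner product loss is linear in each player's action, and crucially, the coefficient of $\x_i$ does not involve any other player's action. Concretely, I would expand
\[
\ell_i(\x_i,\x_{-i}) = -\t_i^\top \hat\x = -w_i \t_i^\top \x_i - w_0 \t_i^\top \x_0 - \sum_{j\neq i} w_j \t_i^\top \x_j,
\]
and observe that every term except $-w_i \t_i^\top \x_i$ is constant in $\x_i$. So for any fixed opponent profile $\x_{-i}\in\X^{n-1}$, minimizing $\ell_i(\cdot,\x_{-i})$ over $\X$ is equivalent to maximizing $w_i\t_i^\top \x_i$ over $\X$, independent of $\x_{-i}$.

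Next I would verify that the argmax set in~\eqref{eq:DSE} is non-empty: since $\X$ is compact and $\x_i\mapsto w_i\t_i^\top\x_i$ is continuous (in fact linear), by the extreme value theorem there exists $\x_i^\star \in \argmax_{\x_i\in\X} w_i\t_i^\top\x_i$. Pick any such $\x_i^\star$ for each $i\in[n]$.

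Finally, I would verify the wDSE inequality~\eqref{eq:wDSE} directly. For any $\y\in\X$ and any $\x_{-i}\in\X^{n-1}$,
\[
\ell_i(\x_i^\star,\x_{-i}) - \ell_i(\y,\x_{-i}) = -w_i\t_i^\top \x_i^\star + w_i\t_i^\top \y \le 0,
\]
because by definition $\x_i^\star$ maximizes $w_i\t_i^\top \x_i$ over $\X$. This shows $\ell_i(\x_i^\star,\x_{-i}) \le \ell_i(\y,\x_{-i})$ for all $\y\in\X$ and all $\x_{-i}$, which is exactly the wDSE condition of Definition~\ref{df:sdse}.

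There is essentially no obstacle here; the whole argument is that the loss function decouples across players under the inner product form, reducing each player's problem to an opponent-independent linear optimization over a compact convex set. The only subtlety worth flagging is that the argmax need not be a singleton (e.g., if $w_i\t_i = \mathbf{0}$, every action is optimal), which is why the appropriate solution concept is the \emph{weakly} dominant strategy equilibrium rather than the strict version discussed in Remark~\ref{rmk:ea}.
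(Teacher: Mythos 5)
Your proof is correct and follows essentially the same route as the paper's: expand $\ell_i$ under the inner product loss, note that only the $-w_i\t_i^\top\x_i$ term depends on $\x_i$, and conclude that each player's best response is opponent-independent. You additionally spell out the existence of the argmax via compactness and the explicit verification of the wDSE inequality, which the paper leaves implicit.
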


\begin{proof}
Given an arbitrary joint action from other players $\x_{-i}$, player $i$'s best response is
\begin{align}
BR(\x_{-i}) &:= \argmin_{\x_i \in \X} \ell_i(\x_i, \x_{-i}) \\
            &= \argmin_{\x_i \in \X} - \t_i^\top \hat\x \\
 &= \argmin_{\x_i \in \X} - \t_i^\top \left( w_0 \x_0 + \sum_{j=1}^n w_j \x_j \right) \\
 &= \argmin_{\x_i \in \X} - w_i \t_i^\top \x_i + \mathrm{const} = \x_i^*.
\end{align}
The best response is independent of $\x_{-i}$.
\end{proof}

One significant benefit of this wDSE is that each influencer can compute their own $\x_i^*$ without the knowledge of other influencers' weights $w_j$ and target $\t_j$, $\forall j\neq i$.
This removes the requirement that $\x_0, \{w_{0:n}\}, \{\t_{1:n}\}$ are common knowledge from Definition~\ref{df:game}.
The wDSE action $\x_i^*$ in~\eqref{eq:DSE} is determined in part by the sign of $w_i$. 
When $w_i<0$ this is akin to reverse psychology: knowing that the receiver will flip its action direction in~\eqref{eq:receiver}, influencer $i$ should intentionally go against its own target.
We remark that $\x_i^*$ may not be exactly along the direction of $w_i \t_i$ since it depends on the domain $\X$.
Nonetheless, computing $\x_i^*$ is a convex optimization problem--maximizing a linear function over a convex set--and thus efficient.

\begin{eg} [Unique wDSE] \label{eg:udse} 
In this example, $d = 2$ and $\X$ is given by the diamond shape in Figure~\ref{fig:cos} (left), with $\t_{1} = \begin{bmatrix} 1 \\ 0 \end{bmatrix}$ and $\t_{2} = \begin{bmatrix} 0 \\ 1 \end{bmatrix}$ . The unique wDSE is the action profile containing the extreme points in $\X$ in the direction of $\t_{1}$ and $\t_{2}$ .

\end{eg}

\begin{eg} [Infinite wDSE] \label{eg:idse} 
In this example, $d = 2$ and $\X = \left[-a, a\right] \times \left[-a, a\right]$, with $\t_{1} = \begin{bmatrix} 1 \\ 0 \end{bmatrix}$ and $\t_{2} = \begin{bmatrix} 0 \\ 1 \end{bmatrix}$ . All actions for player $1$ along the blue line are equivalent and lead to the same loss regardless of player $2$'s action, and all other actions in $\X$ except for these actions along the blue line are strictly dominated. Similarly, all actions for player $2$ along the green line are equivalent and strictly dominate all other actions. As a result, given Definition~\ref{df:sdse}, any pair of actions, one on the blue line for player $1$ and one on the green line for player $2$, is a wDSE.  There are infinite number of them.

\end{eg}

\begin{figure}[H] \centering \begin{tikzpicture} [scale = 0.5]
\draw[thick] (0.0, 6.0) -- (2.0, 2.0) -- (6.0, 0.0) -- (4.0, 4.0) -- (0.0, 6.0);
\draw[->, cyan, ultra thick] (3.0, 3.0) -- (4.0, 3.0);
\node[below right] at (4.0, 3.0){$\t_{1}$};
\draw[->, black!50!green, ultra thick] (3.0, 3.0) -- (3.0, 4.0);
\node[above left] at (3.0, 4.0){$\t_{2}$};
\draw[cyan, fill=cyan, thick] (6.0, 0.0) circle [radius = 0.1];
\node[right] at (6.0, 0.0){$\x_{1}$};
\draw[black!50!green, fill=black!50!green, thick] (0.0, 6.0) circle [radius = 0.1];
\node[above] at (0.0, 6.0){$\x_{2}$};
\draw[cyan, dashed, thick] (6.0, 0.0) -- (6.0, 6.0);
\draw[black!50!green, dashed, thick] (0.0, 6.0) -- (6.0, 6.0);
\draw[orange, fill=orange, thick] (3.0, 3.0) circle [radius = 0.1];
\node[above right] at (3.0, 3.0){$\hat{\x}$};
\draw[orange, dashed, thick] (6.0, 0.0) -- (0.0, 6.0);
\draw[thick] (8.0, 0.0) rectangle (14.0, 6.0);
\draw[cyan, ultra thick] (14.0, 0.0) -- (14.0, 6.0);
\draw[black!50!green, ultra thick] (8.0, 6.0) -- (14.0, 6.0);
\draw[->, cyan, ultra thick] (11.0, 3.0) -- (12.0, 3.0);
\node[right] at (12.0, 3.0){$\t_{1}$};
\draw[->, black!50!green, ultra thick] (11.0, 3.0) -- (11.0, 4.0);
\node[above] at (11.0, 4.0){$\t_{2}$};
\draw[cyan, fill=cyan, thick] (14.0, 3.0) circle [radius = 0.1];
\node[right] at (14.0, 3.0){$\x_{1}$};
\draw[black!50!green, fill=black!50!green, thick] (11.0, 6.0) circle [radius = 0.1];
\node[above] at (11.0, 6.0){$\x_{2}$};
\draw[orange, fill=orange, thick] (14.0, 6.0) circle [radius = 0.1];
\node[above] at (14.0, 6.0){$\x'_{1} = \x'_{2} = \hat{\x'}$};
\draw[orange, fill=orange, thick] (12.5, 4.5) circle [radius = 0.1];
\node[above right] at (12.5, 4.5){$\hat{\x}$};
\draw[orange, dashed, thick] (14.0, 3.0) -- (11.0, 6.0);
\end{tikzpicture} \captionof{figure}{Examples of unique wDSE (left) and infinite number of wDSEs (right) in $d = 2$}\label{fig:cos} \end{figure}
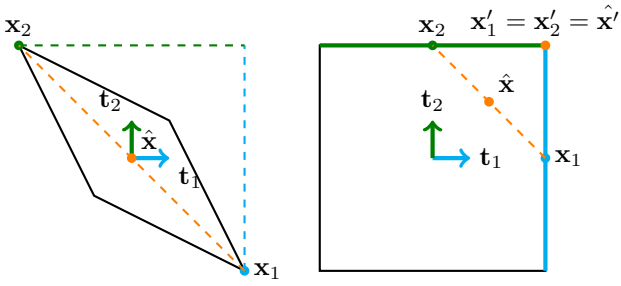 

\subsection{Finite Action Space} 
So far, we have assumed that the influencers' action space $\X$ is a compact and convex (hence infinite unless singleton) subset of $\R^d$. 
In some applications, the influencers are restricted to picking their actions from a finite $\X$ instead.
For example, $\X$ may be the collection of news articles published by all professional news agencies within the past 24 hours, and each influencer may select a handful of such news articles to place on a social media user (the receiver)'s timeline.
This motivates the extension to finite action space:

\begin{df} [BIG with finite action space] \label{df:dbig} 
Battling Influencer Game with finite action space is an $n$-player general-sum game $F = \left(n, \left\{\mathcal{D}^{\left(k_i\right)}\right\}_{i=1}^{n}, \left\{\ell_i\right\}_{i=1}^{n}\right)$, where the action space of player $i, \mathcal{D}^{\left(k_{i}\right)}$ is the set of all subsets containing $k_{i}$ elements (optionally allow repeats) from a finite $\X \subset \mathbb{R}^{d}$.
The loss function of player $i$
is given by $\ell_{i}\left(\x\right) = \left\|\hat{\x} - \t_{i}\right\|^{2}_{2}$ with
\begin{align}
\hat{\x} &= w_{0} \x_{0} + \displaystyle\sum_{i=1}^{n} w_{i} \displaystyle\sum_{j=1}^{k_{i}} \x^{\left(j\right)}_{i} , 
\end{align}
where $\x^{(j)}_i$ is the $j$th element in player $i$'s chosen subset.
The parameters $\x_{0}, \left\{w_i\right\}_{i=0}^{n}, \left\{\t_i\right\}_{i=1}^{n}$ and $\left\{k_i\right\}_{i=0}^{n}$ are common knowledge to all players.
\end{df}

In the original BIG $G$ (Definition~\ref{df:game}) where $\X$ is convex, allowing the players to choose multiple items would not affect the results since choosing multiple items is equivalent to choosing the mean of these items, which is still in $\X$. In the new game, the average of items in $\X$ may not be in $\X$.
However, the new 
game can be interpreted as each influencer picks one ``meta item'' $\x'_i$ instead of $k_i$ items, 
with 
$w'_{i} = w_{i} k_{i}$,
$\x'_{i}= \dfrac{1}{k_{i}} \displaystyle\sum_{j=1}^{k_{i}} \x^{\left(j\right)}_{i}$. 

\begin{prop}
\label{prop:Fpotential}
$F$ is also a potential game, with a potential function
$\phi\left(\left\{\left\{\x^{\left(j\right)}_{i}\right\}_{j=1}^{k_{i}}\right\}_{i=1}^{n}\right):=$
\begin{align}
\left\|w_{0} \x_{0} + \displaystyle\sum_{i=1}^{n} w'_{i} \x'_{i}\right\|_{2}^{2} - 2 \displaystyle\sum_{i=1}^{n} w'_{i} \t^\top_{i} \x'_{i}.
\end{align}
\end{prop}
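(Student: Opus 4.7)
The plan is to reduce $F$ to an instance of the continuous BIG $G$ via the meta-item substitution already suggested in the paragraph preceding the proposition. First I would define $\x'_i := \frac{1}{k_i}\sum_{j=1}^{k_i}\x_i^{(j)}$ and $w'_i := w_i k_i$, then rearrange the receiver in Definition~\ref{df:dbig} to get $\hat\x = w_0 \x_0 + \sum_{i=1}^n w'_i \x'_i$. This is exactly the BIG receiver of~\eqref{eq:receiver} with the pair $(w_i,\x_i)$ replaced by $(w'_i,\x'_i)$. Consequently, each loss $\ell_i(\x) = \|\hat\x - \t_i\|_2^2$ depends on player $i$'s chosen subset only through the single vector $\x'_i$, which enters $\hat\x$ linearly with coefficient $w'_i$.

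Under this reinterpretation, the proposed potential $\phi$ is literally the BIG potential of Theorem~\ref{thm:ctp} written in the primed variables. To verify the potential-game identity $\ell_i(\x_i,\x_{-i}) - \ell_i(\y,\x_{-i}) = \phi(\x_i,\x_{-i}) - \phi(\y,\x_{-i})$ for any unilateral deviation of player $i$, I would replay the computation from the proof of Theorem~\ref{thm:ctp} with $w_i \mapsto w'_i$ and $\x_i \mapsto \x'_i$: introduce the auxiliary variable $\z := w_0\x_0 + \sum_{j\neq i} w'_j \x'_j$, which does not depend on player $i$'s subset; expand both $\ell_i(\x)-\ell_i(\y,\x_{-i})$ and $\phi(\x)-\phi(\y,\x_{-i})$; observe that the $\|\z-\t_i\|^2$ and $\sum_{j\neq i} w'_j \t_j^\top \x'_j$ terms cancel; and read off equality of the surviving expressions.

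The one subtlety worth flagging is that, unlike in BIG, the set of achievable meta-items $\x'_i$ is not all of $\X$ but only the finite collection of means of $k_i$-element (multi)sub\-sets of $\X$. However, the potential-game property is a pointwise identity between two functions on strategy profiles and is entirely independent of the shape of the action domain, so this restriction is immaterial for the proposition (it would matter only for existence/characterization of equilibria, which the proposition does not claim). I therefore do not foresee a real obstacle: the content of the proof is essentially the observation that the meta-item substitution turns $F$ into a BIG instance, at which point Theorem~\ref{thm:ctp} applies verbatim.
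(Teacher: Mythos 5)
Your proposal is correct and matches the paper's (implicit) argument: the paper justifies the proposition precisely via the meta-item substitution $w'_i = w_i k_i$, $\x'_i = \frac{1}{k_i}\sum_{j=1}^{k_i}\x_i^{(j)}$, after which the potential identity follows from the computation in Theorem~\ref{thm:ctp} in the primed variables. Your added remark that the finiteness of the achievable meta-item set is immaterial to the pointwise potential identity is a correct and worthwhile clarification.
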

Therefore, the new game $F$ also has at least one pure NE.
Since $\X$ is finite and the number of players is finite, $F$ is a finite game.
A pure NE can be found through best response dynamics~\cite{roughgarden2010algorithmic}. 
Unlike the continuous case, each iteration of best response dynamics can be costly to compute for large values of $k_{i}$ since it involves solving a variant of the subset sum problem. 


Proposition~\ref{prop:Fpotential} shows that the new game $F$ must have at least one pure NE.
We now show a non-trivial example where $F$ can have an exponential number of pure NEs even when there are only $n=2$ players.
This is true even if $\X$ contains distinct elements, and the influencers cannot repeat chosen elements.

\begin{eg} [Many pure NEs in $F$] \label{eg:expne} 
Consider an instance of game $F$ with
$n = 2$, $d = 1$, $t_{1} = - {1\over 4}, t_{2} = {1\over 4}$, $k_{1} = k_{2} = k$,
the receiver takes simple average $\hat x = {1\over 2k} \left( \sum_{j=1}^k x_1^{(j)}+\sum_{j=1}^k x_2^{(j)} \right)$.
Let $\X = \X_{-} \cup \X_{+}$ where
\begin{align}
\X_{-} \coloneqq \left\{-2 k \cdot 2^{i}\right\}_{i=0}^{|\X|/2-1},
\X_{+} &\coloneqq \left\{2 k \cdot 2^{i}\right\}_{i=0}^{|\X|/2-1}.
\end{align}
Assume $|\X|\ge 2k$.
Consider any player 2 action $\x_2$ which is a subset of $\X_+$ of size $k$.
Note ${1\over 2k}\sum_{j=1}^k x_2^{(j)}$ is a positive integer with $\x_2$ indexing its binary representation.
For example, if $k=2$ and $\x_2=\{2k\cdot 2^0, 2k\cdot 2^1\}$ then this integer is 3.
What is player 1's best response to player 2?
Given player 1's target $t_1=-{1\over 4}$, player 1 should take action $\x_1$ which selects the corresponding negative items in $\X_-$, for example $\x_1=\{-2k\cdot 2^0, -2k\cdot 2^1\}$.
The joint action $\x_1, \x_2$ brings the receiver to $\hat x=0$.
This is the best that player 1 can do: any other $\x'_1$ (i.e. a subset of size $k$ of $\X$) is not a best response because it changes $\hat x$ to a different integer, which is farther away from $t_1$ compared to $\hat x=0$.
Conversely, $\x_2$ is also the best response to that $\x_1$.
Therefore, $(\x_1, \x_2)$ forms a pure NE.
Now, player 2 could have started with $\dbinom{|\X|/2}{k}$ different $\x_2$ subsets, each corresponds to a different pure NE.
Therefore, if we allow $k$ to grow with $|\X|$ such as $k=|\X|/4$, this game has an exponential number of pure NEs.
\end{eg}

\section{Implications of BIG to Value Alignment}

\subsection{Heterogeneous Value Alignment as a Game}

We now show empirically that a stylized version of value alignment can be modeled by BIG.
We are interested in the setting where multiple people with heterogeneous values provide feedback data~\cite{santurkar2023whose,bakker2022fine,chen2024pal}, and that they are aware of the presence of one another. 
Our focus is not on the value alignment algorithm itself, which is fixed and only plays the role of the receiver in BIG.
Instead, we focus on how these people may become strategic in providing their feedback.
Specifically, \textbf{our analysis on BIG implies that rational people will exaggerate their own value stance in anticipation of being ``canceled out'' by one another.}
This may help explain one source of misinformation in social discourse.

Concretely, we connect value alignment and BIG as follows.
Let the $n$ people be the $n$ players in BIG.
We simplify the feedback process by assuming only a single prompt or context which we denote $c$ (to avoid notation conflict with actions $\x$ in BIG).
We adopt the standard ideal point model~\cite{coombs2017psychological,jamieson2011active,singla2016actively,xu2020simultaneous}: 
For this context $c$, an ideal point $\theta \in \R^d$, and a response $\y \in \R^d$ (e.g. a document embedding vector), we define the reward model 
\begin{equation}
r_{\theta}(y) := -\|\y - \theta\|^2
\end{equation}
to measure how good the response $\y$ is for the prompt $c$ according to the ideal point $\theta$.
The closer $\y$ is to $\theta$, the higher the reward.
Given a pair of responses $\y, \y'$, a player with ideal point $\theta$ draws a stochastic binary pairwise judgment label $z \in \{-1,1\}$ according to the Bradley-Terry-Luce (BTL) model~\cite{bradley1952rank}:
\begin{align}
\label{eq:BTL}
P(z \mid \y,\y',\theta) = {1\over 1+\exp(-z (r_{\theta}(y)-r_{\theta}(y')))}
\end{align}
where $z=1$ means $\y \succeq \y'$ and $z=-1$ means $\y \prec \y'$ to this player. 
Each player $i$ will label $N_i$ tuples of the form $(y, y', z)$ where we assume $y,y'$ are i.i.d. from some response distribution $P_Y$.
Finally, the union of tuples from all players are given to the value alignment algorithm as training data.

Crucially, in BIG the players can misreport their values.
Each player $i \in [n]$ has a \emph{true ideal point} $\t_i\in\R^d$ which represents their true value.
These are their targets in BIG, because they hope the value alignment algorithm ultimately arrives at $\t_i$ as well. 
The players have heterogeneous values if the $\t_1 \ldots \t_n$ are distinct.
If the players were truthful, they would each use $\theta=\t_i$ in~\eqref{eq:BTL} when labeling their tuples.
However, BIG allows each player $i$ to choose a \emph{fake ideal point} $\x_i \in \X \subset \R^d$.
Player $i$ instead uses $\theta=\x_i$ in~\eqref{eq:BTL} to label its $N_i$ tuples.
Note the action space $\X$ is the space of ideal points. An action, namely player $i$ choosing ideal point $\x_i$, is eventually reflected in the $N_i$ tuples (in particular the $z$'s) from that player for value alignment.  

The value alignment algorithm plays the role of the receiver in BIG.
But unlike the affine function~\eqref{eq:receiver}, we adopt the standard maximum likelihood estimate for a global ideal point model parameter $\hat \x$, trained from the union of tuples annotated by all players $(\y,\y',z)_{1:N}$ where $N=\sum_{i=1}^n N_i$:
\begin{equation}
\label{eq:MLE}
\hat \x = \argmax_{\x \in \R^d} \sum_{j=1}^N \log P(z_j \mid \y_j, \y'_j, \theta=\x).
\end{equation}
We note that the training data, being a mixture of BTL, is outside the model family in~\eqref{eq:MLE}; \eqref{eq:MLE} itself seems highly nonlinear and depends implicitly on the response distribution $P_Y$.
Nonetheless, our empirical results indicate that the MLE $\hat \x$ is approximately a linear combination of individual player fake ideal points $\x_1 \ldots \x_n$.
This allows us to make predictions on player strategic behaviors based on earlier analysis on BIG. 
In particular, we predict that given the opportunity the players will evolve their fake ideal points $\{\x_i\}$ similar to best-response dynamics; that they do this to make the value alignment algorithm's global ideal point~\eqref{eq:MLE} closer to their true ideal points $\{\t_i\}$; and that they will end up in a Nash equilibrium where their fake ideal points $\{\x_i\}$ are exaggerations of their true values $\{\t_i\}$.
We next present an experiment to support these predictions.

\subsection{A Value Alignment Experiment}

Let there be $n=2$ players with ideal point action space $\X=[-1,1]$.
Their true ideal points are $t_{1} = -0.1, t_{2} = 0.3$ respectively.
We draw $N_1=10000$ i.i.d. pairs 
from the response distribution 
 for player 1 to label: 
$(y,y') \sim P_Y = \mathrm{uniform}[-10, 10]^2$,
and another $N_2=10000$ i.i.d. pairs for player 2.

In iteration 0, both players start truthfully.
Player 1 starts at its true ideal point $x_1=t_1$ to annotate its preferences on the $N_1$ pairs.
That is, for each $(y,y')$ pair player 1 draws a Bernoulli $\pm 1$-valued label $z$ according to~\eqref{eq:BTL} with $\theta=x_1=t_1$.
We show player 1's annotated dataset $(y, y', z)_{N_1}$ in Figure~\ref{va2}(left).
For visualization purpose, we zoom in to the center region $[-3,3]^2$ and also randomly thinned out the data so that the noisy nature of $z$ is easier to see.
Similarly, player 2 annotates its $N_2$ pairs according to~\eqref{eq:BTL} with $\theta=x_2=t_2$ (Figure~\ref{va2} right).

\begin{figure}[H] \centering 
\includegraphics[width=3.0in]{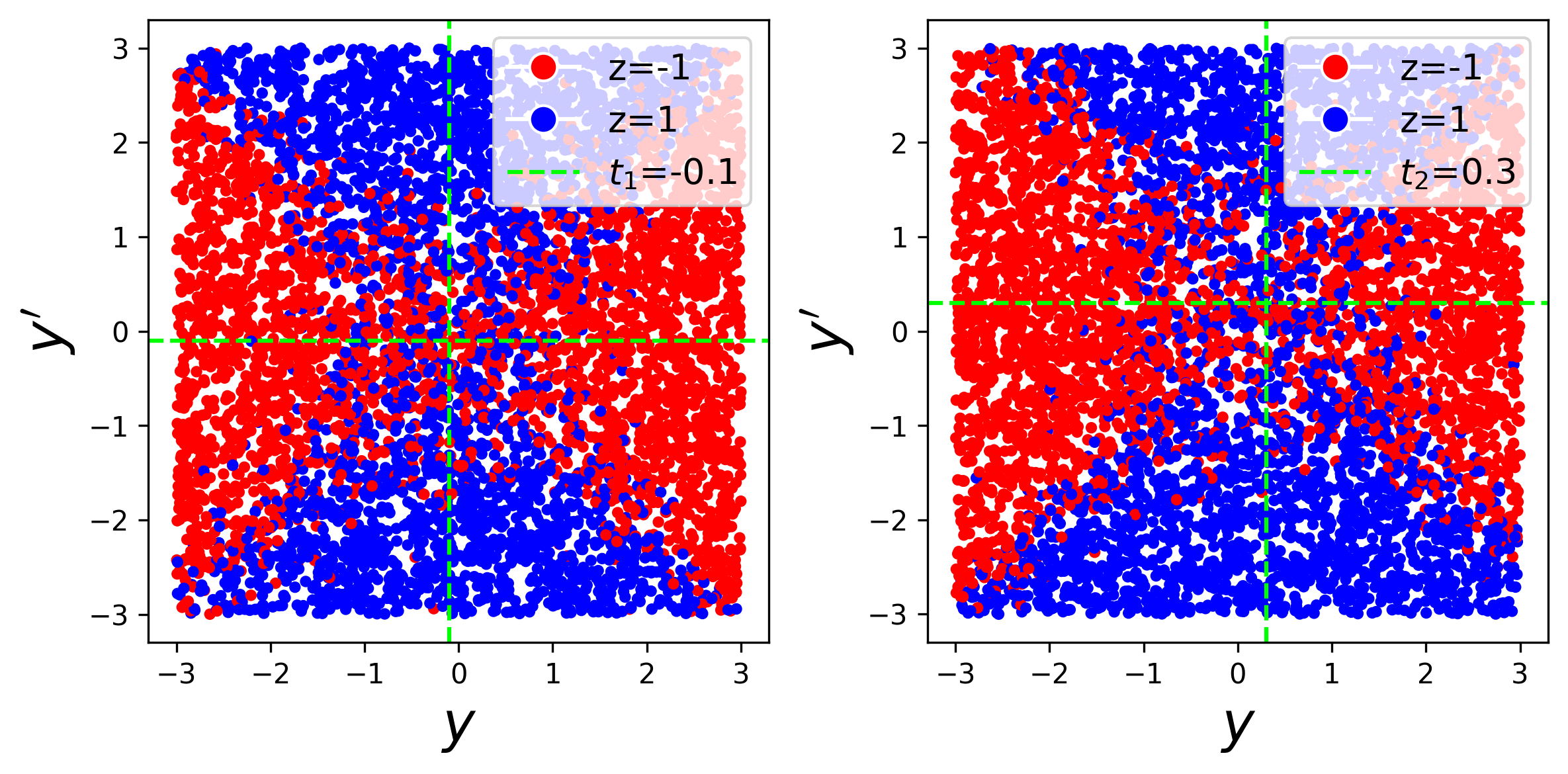}
    \caption{Pairwise preference labels $z$ when both players are truthful. Left: player 1 with $x_1=t_1$, right: player 2 with $x_2=t_2$.}
    \label{va2}
\end{figure}

These $N_1+N_2$ preference tuples are given to the value alignment algorithm (the receiver).
The receiver numerically solves for the MLE by~\eqref{eq:MLE}.  We show the log likelihood surface in Figure~\ref{va1}(right).
The MLE is at $\hat x=0.103$, shown in the same figure(left). 
We observe that the receiver, despite maximizing the log likelihood, can be well-approximated by an affine receiver
\begin{equation}
\label{eq:approxreceiver}
\hat x = {1\over 2} x_1 + {1\over 2} x_2.
\end{equation}
This concludes iteration 0.

\begin{figure}[H] \centering 
\includegraphics[width=3.2in]{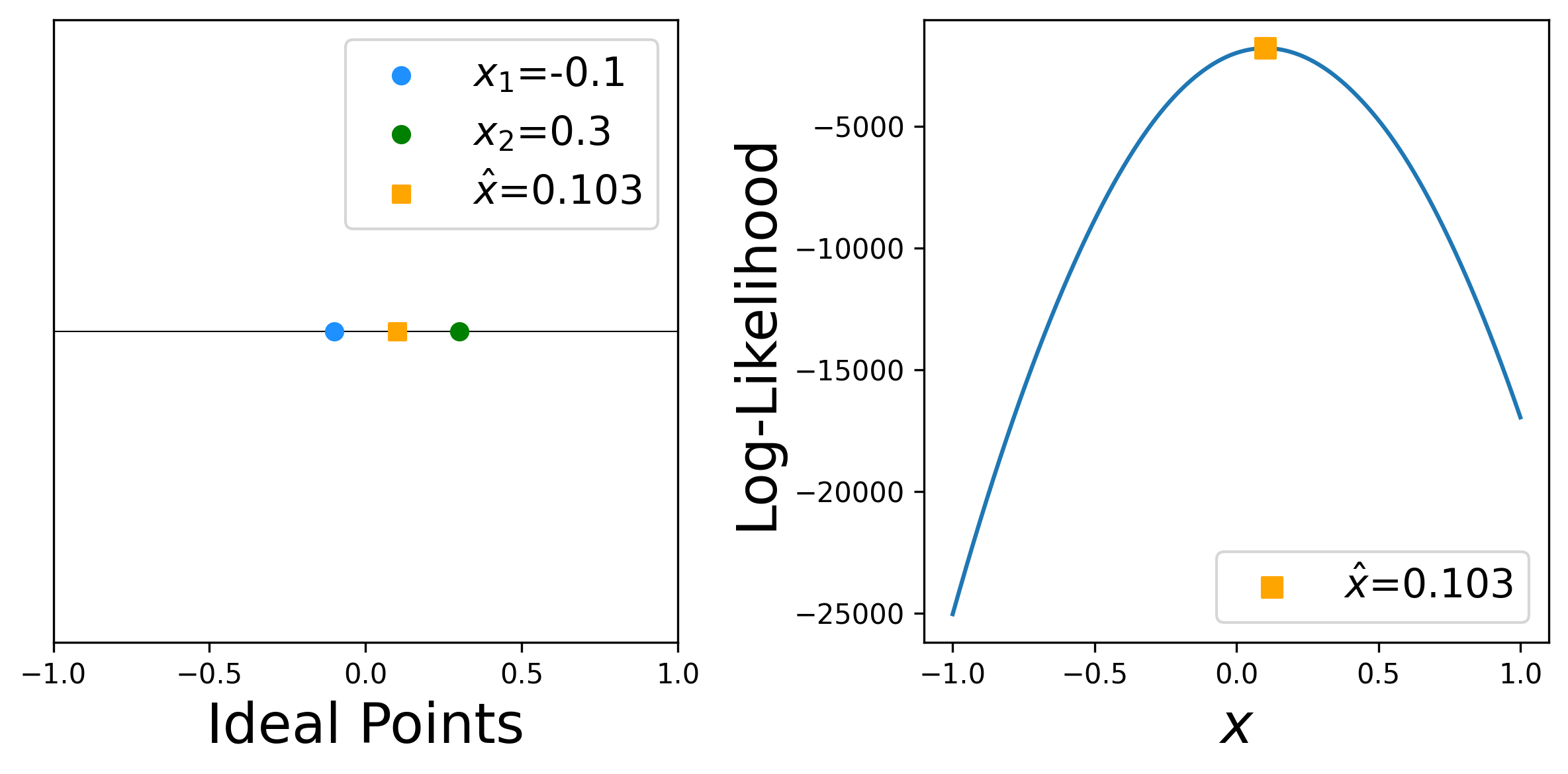}
    \caption{Receiver's MLE $\hat x$ under truthful players.}
    \label{va1}
\end{figure}

In iteration 1, imagine player 1 observes all training data and the value alignment algorithm output from iteration 0 (i.e. a global ideal point at $\hat x=0.103$).
It realizes that the output is far from its true ideal point $t_1=-0.1$.
For the sake of exposition, we now allow player 1 to re-label its $N_1$ tuples using a different (fake) ideal point $x_1$.  Player 2's data remain fixed.  Then we will run value alignment algorithm again.   If player 1 can simulate the value alignment algorithm, it can perform a binary search in $x_1$ to best-respond to player 2, with the goal to move value alignment algorithm output to $t_1$.  We show this binary search in Figure~\ref{va3}.  After 6 binary search steps player 1 finds that $x_1=-0.536$ is good: together with player 2's data this indeed moves value alignment algorithm output to $\hat x = -0.101$, very close to player 1 target $t_1=-0.1$.
This is one iteration of empirical best-response by player 1.
Again, the empirical based-response $x_1=-0.536$ is close to the theoretical best-response under the affine receiver~\eqref{eq:approxreceiver}, which is $x_1=2 t_1 - x_2=-0.5$.

\begin{figure}[H] \centering 
\includegraphics[width=1.8in]{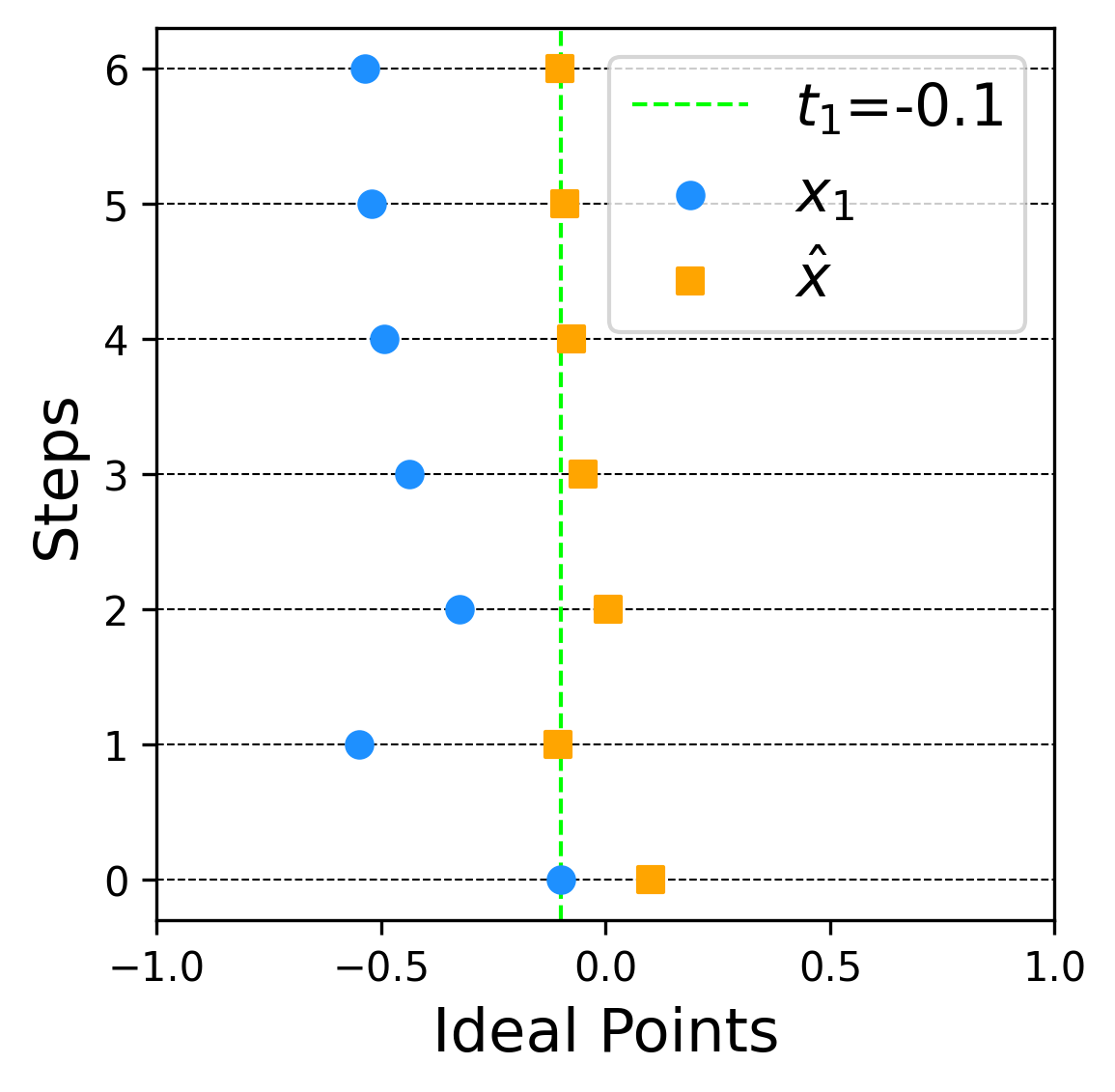}
    \caption{Player 1 conducts a binary search to find the empirical best response $x_1$.}
    \label{va3}
\end{figure}


In subsequent iterations, we allow alternating players to perform such empirical best-response.
Figure~\ref{va4} shows the dynamics.
In iteration 2, player 2 is able to somewhat drag value alignment output $\hat x$ back towards its target.  However, player 2 is limited by the action space: even with the maximum fake ideal point $x_2=1$ the output only moves back to $\hat x=0.209$, not enough to reach its target $t_2=0.3$.
In iteration 3, player 1 fights back with the minimum fake ideal point $x_1=-1$, dragging value alignment output to $\hat x = 0.024$, closer but not reaching its target $t_1=-0.1$.
This is when the dynamics converges to a Nash equilibrium, where neither player can make further improvements. 
Observe at the NE $(x_1=-1, x_2=1)$ the player's fake ideal points are much exaggerated versions of their true ideal points $t_1=-0.1, t_2=0.3$, respectively.

\begin{figure}[htb] \centering 
\includegraphics[width=1.8in]{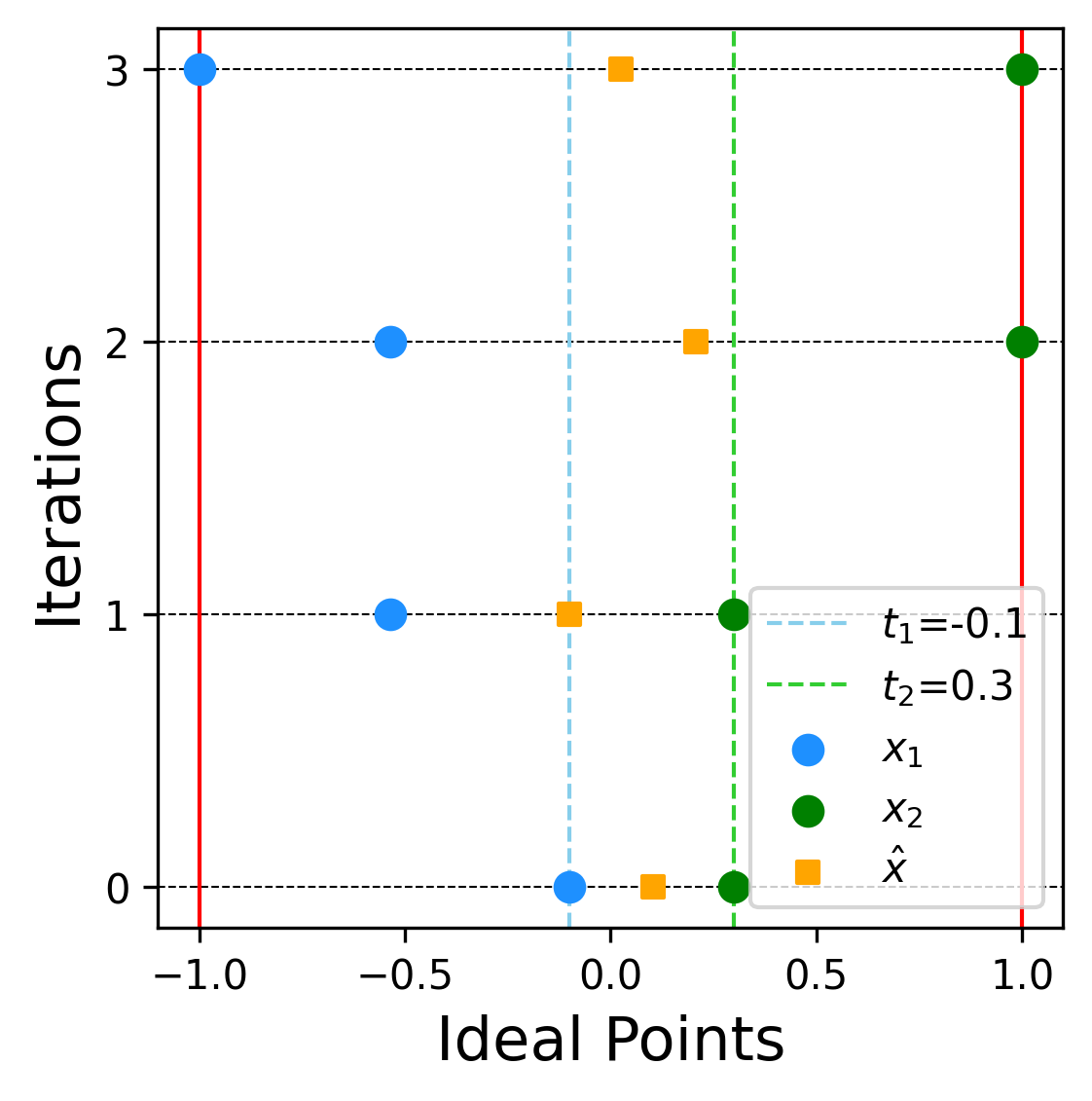}
    \caption{Empirical best-response dynamics converges to an exaggerating Nash equilibrium in iteration 3.}
    \label{va4}
\end{figure}

In Figure~\ref{va5+6} we provide a visualization of the final preference labels $z$ generated by the players after reaching the Nash equilibrium $(x_1 = -1, x_2=1)$. 
This figure is to be contrasted with Figure~\ref{va2}.
Now both players are untruthful and produce preference labels according to their fake, exaggerated ideal points.
This shift illustrates the effect of BIG.
\begin{figure}[H]
    \centering
     \includegraphics[width=3.0in]{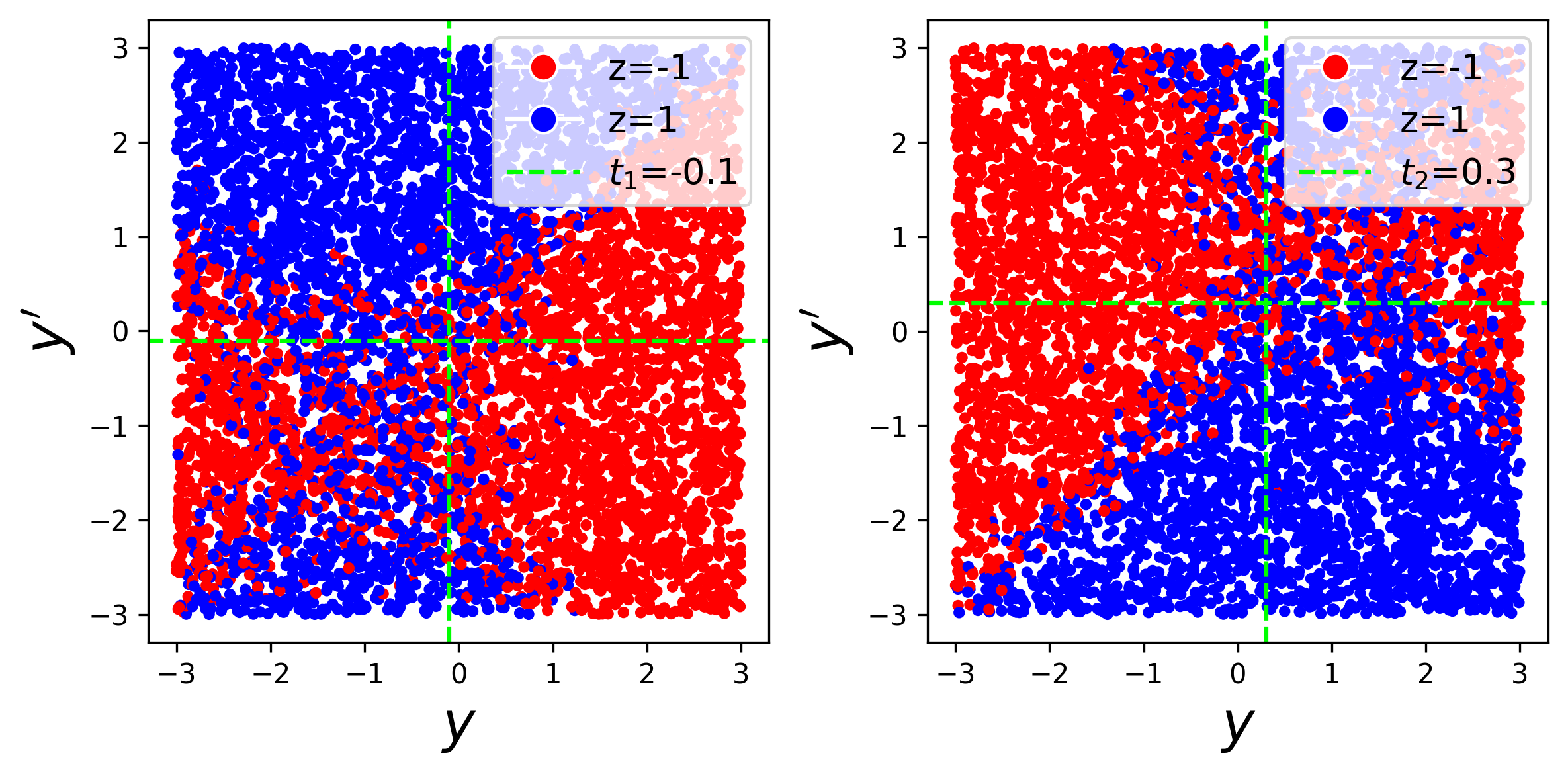}
    \caption{Preference labels $z$ at the Nash equilibrium showing untruthfulness.  Left: player 1 exaggerates with $x_1=-1$; Right: player 2 exaggerates with $x_2=1$.}
    \label{va5+6}
\end{figure}

\section{Conclusion and Future Work}
We proved that a version of the battling influencers game is a potential game, and characterized its pure Nash equilibrium structures.
As a use case, our game applies to standard value alignment via learning from preference feedback.
Consequently, we rationalized a strategic behavior (exaggeration) in alignment data providers.
Future work will focus on mechanism design to remove incentives for such strategic behaviors.


\textbf{Acknowledgment} This project was supported in part by NSF grants 1836978, 2023239, 2202457, 2331669, ARO MURI W911NF2110317, and AF CoE FA9550-18-1-0166.



\bibliography{va}
\bibliographystyle{icml2025}

\end{document}